\let\proof\relax   
\newtheorem{lemma}{Lemma}
\newtheorem{theorem}{Theorem}
\newtheorem{example}{Example}
\newcommand*{\transpose}{%
  {\mathpalette\@transpose{}}%
}
\newcommand\overmat[2]{%
  \makebox[0pt][l]{$\smash{\overbrace{\phantom{%
    \begin{matrix}#2\end{matrix}}}^{\text{$#1$}}}$}#2}
\begin{document}

\newcommand{\SB}[3]{
\sum_{#2 \in #1}\biggl|\overline{X}_{#2}\biggr| #3
\biggl|\bigcap_{#2 \notin #1}\overline{X}_{#2}\biggr|
}

\newcommand{\Mod}[1]{\ (\textup{mod}\ #1)}

\newcommand{\overbar}[1]{\mkern 0mu\overline{\mkern-0mu#1\mkern-8.5mu}\mkern 6mu}

\makeatletter
\newcommand*\nss[3]{%
  \begingroup
  \setbox0\hbox{$\m@th\scriptstyle\cramped{#2}$}%
  \setbox2\hbox{$\m@th\scriptstyle#3$}%
  \dimen@=\fontdimen8\textfont3
  \multiply\dimen@ by 4             
  \advance \dimen@ by \ht0
  \advance \dimen@ by -\fontdimen17\textfont2
  \@tempdima=\fontdimen5\textfont2  
  \multiply\@tempdima by 4
  \divide  \@tempdima by 5          
  \ifdim\dimen@<\@tempdima
    \ht0=0pt                        
    \@tempdima=\fontdimen5\textfont2
    \divide\@tempdima by 4          
    \advance \dimen@ by -\@tempdima 
    \ifdim\dimen@>0pt
      \@tempdima=\dp2
      \advance\@tempdima by \dimen@
      \dp2=\@tempdima
    \fi
  \fi
  #1_{\box0}^{\box2}%
  \endgroup
  }
\makeatother

\makeatletter
\renewenvironment{proof}[1][\proofname]{\par
  \pushQED{\qed}%
  \normalfont \topsep6\p@\@plus6\p@\relax
  \trivlist
  \item[\hskip\labelsep
        \itshape
    #1\@addpunct{:}]\ignorespaces
}{%
  \popQED\endtrivlist\@endpefalse
}
\makeatother

\makeatletter
\newsavebox\myboxA
\newsavebox\myboxB
\newlength\mylenA

\newcommand*\xoverline[2][0.75]{%
    \sbox{\myboxA}{$\m@th#2$}%
    \setbox\myboxB\null
    \ht\myboxB=\ht\myboxA%
    \dp\myboxB=\dp\myboxA%
    \wd\myboxB=#1\wd\myboxA
    \sbox\myboxB{$\m@th\overline{\copy\myboxB}$}
    \setlength\mylenA{\the\wd\myboxA}
    \addtolength\mylenA{-\the\wd\myboxB}%
    \ifdim\wd\myboxB<\wd\myboxA%
       \rlap{\hskip 0.5\mylenA\usebox\myboxB}{\usebox\myboxA}%
    \else
        \hskip -0.5\mylenA\rlap{\usebox\myboxA}{\hskip 0.5\mylenA\usebox\myboxB}%
    \fi}
\makeatother

\xpatchcmd{\proof}{\hskip\labelsep}{\hskip3.75\labelsep}{}{}

\pagestyle{plain}

\title{\fontsize{20}{28}\selectfont Single-Server Single-Message Online Private Information Retrieval with Side Information}


\author{Fatemeh Kazemi, Esmaeil Karimi, Anoosheh Heidarzadeh, and Alex Sprintson\thanks{The authors are with the Department of Electrical and Computer Engineering, Texas A\&M University, College Station, TX 77843 USA (E-mail: \{fatemeh.kazemi, esmaeil.karimi, anoosheh, spalex\}@tamu.edu).}}

\maketitle 

\thispagestyle{plain}

\begin{abstract}
In many practical settings, the user needs to retrieve information from a server in a periodic manner, over multiple rounds of communication. In this paper, we discuss the setting in which this information needs to be retrieved privately, such that the identity of all the information retrieved until the current round is protected. This setting can occur in practical situations in which the user needs to retrieve items from the server or a periodic basis, such that the privacy needs to be guaranteed for all the items been retrieved until the current round. We refer to this setting as an \emph{online private information retrieval} as the user does not know the identities of the future items that need to be retrieved from the server. 

Following the previous line of work by Kadhe \emph{et al.}~we assume that the user knows a random subset of $M$ messages in the database as a side information which are unknown to the server. Focusing on scalar-linear settings, we characterize the \emph{per-round capacity}, i.e., the maximum achievable download rate at each round, and present a coding scheme that achieves this capacity. The key idea of our scheme is to utilize the data downloaded during the current round as a side information for the subsequent rounds. We show for the setting with $K$ messages stored at the server, the per-round capacity of the scalar-linear setting is $C_1= ({M+1})/{K}$ for round $i=1$ and ${C_i= {(2^{i-1}(M+1))}/{KM}}$ for round $i\geq2$, provided that ${K}/({M+1})$ is a power of $2$.




\end{abstract}

\section{Introduction}

The goal of the Private Information Retrieval (PIR) schemes~\cite{chor1995private} is to enable a user to privately download, with minimum cost, a message belonging to a database with copies stored on a single or multiple remote servers, without revealing which message it is requesting. In a single server scenario, the entire database needs to be downloaded to preserve the privacy of the requested message. However, when the user has some side information about the database \cite{kadhe,heidarzadeh2018capacity,tandon2017capacity,wei2017fundamental,wei2018cache,chen2017capacity,shariatpanahi2018multi}, the information-theoretic privacy can be achieved more efficiently than downloading the whole database. 
 
In the PIR with side information setting, the user has access to a random subset of the messages in the database as a side information, which are unknown to the server. This side information could have been obtained from other trusted users or through previous interactions with the server. In this setting, the savings in the download cost depend on whether the user wants to protect only the privacy of the requested message, or the privacy of both the requested message and the messages in the side information. 
 
To the best of our knowledge, all of the prior works on the private information retrieval focus on the \emph{single} message request.  However, in many practical settings, the user needs to retrieve multiple messages periodically, over a period of time. For example, a user might  retrieve a news item or the stock market information on a daily basis. The key requirement in such scenarios is to protect the identity of all the requested messages up to the current round. 
By leveraging previously downloaded messages, the user can significantly reduce the number of bits that need to be downloaded. Accordingly, in this paper we analyze both the fundamental limits as well as the achievability schemes for the multi-round PIR schemes. We refer to this setting as an \emph{online private information retrieval} as the user does not know the identities of the future items that need to be retrieved from the server. 



\subsection{Main Contributions}

In this paper, we study the problem of single-server online PIR with side information. In this problem, a user wishes to download a sequence of messages ${\mathcal{X}_{W}=\{X_{W_1}, X_{W_2}, \cdots, X_{W_t}\}}$ from a database $\mathcal{X}$ of $K$ messages, stored on a single server. The communication is performed in rounds, such that at round $i$, the user wishes to retrieve a message $X_{W_i}$ for some $W_i\in [K]$. We assume that the user decides on which message $W_i$ to request at round $i$ and that the identity of the future messages  $W_j$, $j>i$ are not known at round $i$. We also assume that the user has access to $M$ messages which are selected uniformly at random and whose identity are not known to the server. 

We focus on the scenario where at round $i$, the user wishes to protect the privacy of all the requested messages up to round $i$, $\{W_1,\cdots,W_i\}$ for $1\leq i\leq t$. That is, after the user makes a request to the server at round $i$, the server cannot decide which of the $K$ messages is more likely to get requested at that round and at the previous rounds. Focusing on scalar-linear settings, we characterize the \emph{per-round capacity}, i.e., the maximum achievable download rate at each round. We also present a scalar-linear coding scheme that achieves this capacity. The key idea of our scheme is to utilize the data downloaded during the current round as a side information for subsequent rounds. We show for the setting with $K$ messages stored at the server and a random subset of $M$ messages available to the user at the first round, the per-round capacity of the scalar-linear scheme is ${C_1= ({M+1})/{K}}$ for round $i=1$ and ${{C_i= {(2^{i-1}(M+1))}/{KM}}}$ for round $i\geq 2$, provided that  ${K}/({M+1})=2^l$ for some ${l \geq 1}$. 
 

\subsection{Related Work}

The classical PIR problem with multiple servers each of which stores the full copy of the database, has been extensively studied \cite{sun2016capacity,sun2018capacity,banawan2018capacity}.  The most relevant to our paper is the line of work
that focuses on setting with multiple retrieved messages\cite{banawan2018multi,shariatpanahi2018multi,heidarzadeh} as well as settings in which the user access to certain files as side information before the information retrieval process begins. The side information settings have been studied in \cite{kadhe,heidarzadeh2018capacity} for the single server setting and in \cite{tandon2017capacity,wei2018cache,wei2017fundamental,chen2017capacity,shariatpanahi2018multi} for the multi-server setting. 

Kadhe \emph{et al.}~\cite{kadhe} have initiated the study of the single-server single-message PIR with side information. The multi-server single-message and the multi-server multi-message PIR with side information problems, in the scenario where the user wants to protect the privacy of both the requested message(s) and the messages in the side information, are studied in \cite{chen2017capacity} and \cite{shariatpanahi2018multi}, respectively. Heidarzadeh \emph{et al.}~\cite{heidarzadeh2018capacity} focused on a setting in which the side information is random linear combination of a subset of messages. 

To the best our knowledge, none of the prior works on the private information retrieval focused on  the \emph{online} settings in which the requests are issued one at a time such that the identity of future requests are unknown.

\section{Problem Formulation and Main Results}\label{sec:SN}



For a positive integer $i$, denote $[i]\triangleq\{1,\dots,i\}$. Let $\mathbb{F}_q$ be a finite field of size $q$ and $\mathbb{F}_{q^m}$ be an extension field of $\mathbb{F}_q$ of size $q^m$ for some prime $q\geq 2$ and $m\geq 1$. We assume that there is a server storing a set $\mathcal{X}$ of $K$ messages, $\mathcal{X}\triangleq \{X_1,\dots,X_K\}$, with each message $X_i$ being independently and uniformly distributed over $\mathbb{F}_{q^m}$, i.e., ${H(X_1) = \dots = H(X_K) = L}$ and ${H(X_1,\dots,X_K) = KL}$, where $L \triangleq m\log_2 q$. We assume that there is a user that wishes to retrieve a sequence of messages ${\mathcal{X}_{W}=\{X_{W_1}, X_{W_2}, \cdots,X_{W_t}\}}$ from the server so that at round $i$, the user wishes to retrieve the message $X_{W_i}$ for some $W_i\in [K]$. We assume that the identity of  the index $W_i$ of the message retrieved at round $i$ is not known to the user before round $i$.

We also assume that initially the user knows a random subset $\mathcal{X}_{\mathcal{S}} $ of $\mathcal{X}$ that includes $M$ messages for some $\mathcal{S} \subset [K]$, $|\mathcal{S}|=M$. We refer to $W_i$ as the \emph{demand index at round $i$}, $X_{W_i}$ as the \emph{demand at round $i$}, and refer to $\mathcal{S}$ as the \emph{side information index set}, $\mathcal{X}_{\mathcal{S}} $ as the \emph{side information} and $M$ as the  \emph{size of the side information set}. 

Let $\boldsymbol{S}$ and $\boldsymbol{W}_i$ be random variables corresponding to $\mathcal{S}$ and $W_i$, respectively. Denote the probability mass function (pmf) of $\boldsymbol{S}$ by $p_{\boldsymbol{S}}(\cdot)$, and the conditional pmf of $\boldsymbol{W}_i$ given $\boldsymbol{S}$ by $p_{\boldsymbol{W}_i|\boldsymbol{S}}(\cdot|\cdot)$. We assume that $\mathcal{S}$ is uniformly distributed over all subsets of $[K]$ of size $M$, i.e.,

\[p_{\boldsymbol{S}}(\mathcal{S}) = \frac{1}{\binom{K}{M}},\quad \mathcal{S} \subset [K], |\mathcal{S}|=M\] and $\boldsymbol{W}_i$'s are independent and uniformly distributed over $[K]\setminus \mathcal{S}$, i.e., 

\begin{equation*}
p_{\boldsymbol{W}_i|\boldsymbol{S}}(W_i|\mathcal{S}) = 
\left\{\begin{array}{ll}
\frac{1}{K-M}, & W_i \notin \mathcal{S}\\	
0, & \text{otherwise}.
\end{array}\right.	
\end{equation*}

Also, we assume that the server knows the size of $\mathcal{S}$ (i.e., $M$), the pmf $p_{\boldsymbol{S}}(.)$ and $p_{\boldsymbol{W}_i|\boldsymbol{S}}(.|.)$, but the realizations $\mathcal{S}$ and $W_i$ are unknown to the server a priori. 

At round $i$, for any $\mathcal{S}$ and $W_i$, in order to retrieve $X_{W_i}$, the user sends to the server a query $Q^{[W_i,\mathcal{S}]}$, and upon receiving ${Q}^{[W_i,\mathcal{S}]}$, the server sends to the user an answer ${A}^{[W_i,\mathcal{S}]}$. We define $Q^{[W_{1:i},\mathcal{S}]}\triangleq \{Q^{[W_1,\mathcal{S}]},Q^{[W_2,\mathcal{S}]},\cdots,Q^{[W_i,\mathcal{S}]}\}$ and $A^{[W_{1:i},\mathcal{S}]}\triangleq \{A^{[W_1,\mathcal{S}]},A^{[W_2,\mathcal{S}]},\cdots,A^{[W_i,\mathcal{S}]}\}$ as the set of all queries and answers up to the round $i$, respectively.

 Note that the query at round $i$, $Q^{[W_i,\mathcal{S}]}$ is a (potentially stochastic) function of $W_i, \mathcal{S}, X_{\mathcal{S}}, Q^{[W_{1:{i-1}},\mathcal{S}]}$, and  $A^{[W_{1:{i-1}},\mathcal{S}]}$. Similarly, the answer at round $i$, $A^{[W_i,\mathcal{S}]}$ is a (deterministic) function of ${Q}^{[W_{1:i},\mathcal{S}]}$ and the messages in $\mathcal{X}$, i.e., \[H({A}^{[W_i,\mathcal{S}]}| {Q}^{[W_{1:i},\mathcal{S}]},\mathcal{X}) = 0.\]

The queries from the first round up to round $i$ all together, $Q^{[W_{1:i},\mathcal{S}]}$ must protect the \emph{individual privacy} 
 for all the user's demand indices up to round $i$ from the server, i.e., \[\mathbb{P}(\boldsymbol{W}_j= W'| Q^{[W_{1:i},\mathcal{S}]}, {\mathcal{X}})= \frac{1}{K}\]for all $W' \in [K]$ and all $j \in [i]$. This means that some correlations between the demands in different rounds can be revealed to the server, but all the demands up to round $i$ must be protected to be individually private at each round. This condition is referred to as the \emph{privacy condition}. 

All the answers from the first round up to round $i$, $A^{[W_{1:i},\mathcal{S}]}$ along with the side information ${\mathcal{X}}_{\mathcal{S}}$ must enable the user to retrieve the demand $X_{W_i}$. This condition is referred to as the \emph{recoverability condition}, as follows: \[H(X_{W_i}| A^{[W_{1:i},\mathcal{S}]}, Q^{[W_{1:i},\mathcal{S}]}, {\mathcal{X}}_{\mathcal{S}})=0.\]

The problem of the single-server \emph{Online Private Information Retrieval (OPIR)} is to design a query $Q^{[W_i,{\mathcal{S}}]}$ and an answer $A^{[W_i,{\mathcal{S}}]}$ for any given ${\mathcal{S}}$ and $W_i$ at round $i \geq 1$, that satisfy the privacy and recoverability conditions. 

The \emph{per-round rate} of an OPIR algoirhtm at round $i$ denoted by $R_i$, is defined as the ratio of the entropy of a message, i.e., $L$, to the maximum entropy of the answer at round $i$, i.e., $$\displaystyle R_i=\min_{W_{1:i},\mathcal{S}}\frac{L}{H(A^{[{W_i},\mathcal{S}]})}.$$  The \emph{per-round capacity} of \emph{OPIR} at round $i$ denoted by $C_{\text{\it i}}$, is defined as the supremum of rates over all \emph{OPIR} algorithms that achieve the capacity up to round $i-1$. We focus on scalar-linear capacity, which corresponds to the maximum rate that can be achieved by scalar-linear protocols.

The goal of this paper is to establish the scalar-linear per-round capability of the OPIR and present an algorithm that achieves this capacity. Theorem \ref{Capacity 1} characterizes the capacity of scalar-linear OPIR problem for the case when $\frac{K}{M+1}=2^l$ for some $l \geq 1$. 

\begin{theorem}\label{Capacity theorem}
For the OPIR problem with $K$ messages, and side information of size $M$, when ${K}/({M+1})=2^l$ for some $l \geq 1$, the scalar-linear per-round capacity at round $i$ is given by:
\begin{dmath}\label{Capacity 1}
C_i=
\begin{cases}
{\frac{M+1}{K}} & \quad i=1\\
\frac{2^{i-1}(M+1)}{KM} & \quad i \geq 2\\
\end{cases}
\end{dmath}
\end{theorem}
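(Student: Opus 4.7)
My plan is to split the argument into achievability (a scalar-linear scheme meeting the stated rates) and a matching converse, handled separately for each round. For $i=1$ the claim coincides with the single-round scalar-linear PIR-with-side-information setting of Kadhe \emph{et al.}\ \cite{kadhe}: since $K/(M+1)=2^l$ is a positive integer, their partition scheme (split $[K]$ into $K/(M+1)$ groups of size $M+1$, place $\mathcal{S}\cup\{W_1\}$ into one group, and download the $K/(M+1)$ group sums) attains $R_1=(M+1)/K$, and their converse shows this is tight. Hence the base case is immediate, and the remaining work is rounds $i\ge 2$, which I will handle by induction on $i$.

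For the inductive achievability at round $i$, the key observation is that after round $i-1$ the user retains not only $\mathcal{X}_{\mathcal{S}}$ and the decoded demands $X_{W_1},\dots,X_{W_{i-1}}$ but also every prior answer, i.e.\ a growing bank of linear combinations of the still-unknown messages. Writing $d_i$ for the number of (sub-)symbols downloaded at round $i$, the theorem is equivalent to $d_1=2^l$ and $d_i=M\cdot 2^{l-i+1}$ for $i\ge 2$, so $d_i$ halves from round $2$ onward. This suggests sub-packetizing each message into $2^{i-2}$ blocks at round $i$ and forming the round-$i$ query by pairing the block of $X_{W_i}$ with unused coded-side-information blocks inherited from previous rounds, chosen symmetrically over all valid realizations of $(\mathcal{S},W_{1:i})$. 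Recoverability then reduces to invertibility of the paired combinations, and the joint individual-privacy condition for $W_1,\dots,W_i$ follows because the induced query distribution is invariant under the index permutations consistent with the partition structure of the history.

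For the converse at round $i$, I would combine the recoverability identity $H(X_{W_i}\mid A^{[W_{1:i},\mathcal{S}]},Q^{[W_{1:i},\mathcal{S}]},\mathcal{X}_{\mathcal{S}})=0$ with the joint individual-privacy condition on $W_1,\dots,W_i$. Since the protocol is scalar-linear, every answer is a linear functional of $\mathcal{X}$, and both constraints translate into linear-algebraic conditions on the round-$i$ coefficient matrix: its column indexed by $W_i$ must lie in the span of the user's accumulated information (decodability), and simultaneously the matrix must be compatible with \emph{every} candidate demand sequence under the symmetry forced by privacy. Averaging over this symmetry group should yield a rank/counting lower bound of $M\cdot 2^{l-i+1}$ on the number of downloaded sub-symbols at round $i$, matching the achievable rate. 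The main obstacle is precisely this converse: I will need to carefully track how the assumption that capacity is attained in rounds $1,\dots,i-1$ constrains the algebraic structure of the history, and then show that no cleverer round-$i$ design can escape the rank bound. The achievability, once the recursive sub-packetization is in place, should follow by a routine privacy/decodability verification.
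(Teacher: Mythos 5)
Your base case $i=1$ correctly reduces to Kadhe \emph{et al.}'s partition scheme, and you correctly anticipate that both directions for $i\ge 2$ must be inductive and must exploit the history of answers. However, both halves of the inductive step have genuine gaps.

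On achievability, your proposal to sub-packetize each message into $2^{i-2}$ blocks takes you out of the scalar-linear regime the theorem is about: once you split messages into blocks, the answers are no longer scalar linear combinations of the $X_k$, so even if such a scheme achieved the stated rate it would not establish the scalar-linear achievability claimed in the theorem. It is also unnecessary: writing $n_i = K/(2^{i-1}(M+1))$, the download at round $i\ge 2$ is the integer $n_i M$, so no fractional symbols are needed. The paper's scheme is instead a partition-merging protocol. At round $i$ the user pairs up the $n_{i-1}$ partitions from round $i-1$ into $n_i = n_{i-1}/2$ super-partitions, merging the partition containing $W_i$ with the one containing $\mathcal{S}$, and the server returns $M$ fresh linearly independent combinations (new Cauchy-matrix columns) of the messages inside each super-partition. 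Decodability works because after round $i$ the user can solve for \emph{all} $2^{i-1}(M+1)$ messages in its own super-partition by stacking the old and new equations, not because of any block pairing; and privacy follows from the uniform random pairing of partitions, which makes every index in every super-partition equally plausible. Your phrase ``pairing the block of $X_{W_i}$ with unused coded-side-information blocks'' does not describe this mechanism, and it is not clear that your construction would be individually private for $W_1,\dots,W_i$ jointly.

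On the converse, your outline (``averaging over the symmetry group should yield a rank/counting lower bound'') is directionally reasonable but omits the argument that actually closes the bound. The paper's proof hinges on a concrete combinatorial object: for every $W^\star\in[K]$, privacy plus recoverability force the existence of a vector in the row span of $A_1,\dots,A_i$ supported on $\{W^\star\}\cup\mathcal{S}^\star$, where $\mathcal{S}^\star$ is an $M$-subset of a single round-one partition not containing $W^\star$. Collecting one such vector per $W^\star$ gives a $K\times K$ matrix $\Gamma$ whose rank is shown to be at least $K/2$ by a left/right pigeonhole argument, and then a counting argument bounds how many of those $\ge K/2$ independent directions can already lie in the row spans of $A_1,\dots,A_{i-1}$ (using the inductive hypothesis that the earlier rounds were capacity-achieving), forcing $\operatorname{rank}(A_i)\ge KM/(2^{i-1}(M+1))$. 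Without identifying this $\Gamma$-matrix device, or an equivalent concrete mechanism for turning the privacy symmetry into a rank bound, your converse remains a plan rather than a proof; the ``main obstacle'' you flag is exactly the part that is missing.
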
\vspace{0.05cm}

%
%
%

\section{Converse Proof}\label{converse proof}

In this section, we prove the converse part of Theorem~\ref{Capacity theorem}.
%
%
%
%
%
%
Suppose that the user wishes to retrieve a sequence of messages ${\mathcal{X}}_{W}=\{{{X}}_{W_1}, {{X}}_{W_2}, \cdots, {{X}}_{W_t}\}$ from the server so that at round $i$, the user wants to download the message ${{X}}_{W_i}$ for some $W_i\in [K]$, and knows ${\mathcal{X}}_{\mathcal{S}}$ for a given ${\mathcal{S}} \subseteq [K] \setminus W$, $|{\mathcal{S}}|= M$. By assumption, ${K}/({M+1})=2^l$ for some $l \geq 1$. At round $i$, for any ${\mathcal{S}}$ and $W_i$, in order to retrieve ${{X}}_{W_i}$, the user sends to the server a query $Q^{[W_i,{\mathcal{S}}]}$, and the server responds to the user by an answer ${A}^{[W_i,{\mathcal{S}}]}$. The answer ${A}^{[W_i,{\mathcal{S}}]}$ at round $i$ is a set of $m_i$ messages, i.e., ${{A}^{[W_i,{\mathcal{S}}]}\triangleq \{y_{i,1},\dots,y_{i,{m_i}}\}}$. In linear OPIR schemes each message $y_{i,j}$ for $1\leq j \leq {m_i}$ is a linear combination of the original messages in ${\mathcal{X}}$, i.e. \[y_{i,j}=\sum_{m \in [K]}\gamma_{i,j}^m {{X}}_m,\]
where $\gamma_{i,j}^m \in \mathbb{F}_q$ are the encoding coefficients of $y_{i,j}$. We refer to the vector $\gamma_{i,j}=[\gamma_{i,j}^1,\gamma_{i,j}^2,\cdots,\gamma_{i,j}^K]$ as the encoding vector of $y_{i,j}$. The $i$-th unit encoding vector that corresponds to the original packet ${{X}}_i$ is denoted by $u_i=[u_i^1,u_i^2,\cdots,u_i^K]$, where $u_i^i=1$ and $u_i^j=0$ for $i \neq j$. Consider the set of $K$ linearly independent unit vectors $\{u_1,u_2,\cdots,u_K\}$ as a basis for a vector space $\mathcal{V}$ of dimension $K$. Thus, the encoding vector of $y_{i,j}$, i.e., $\gamma_{i,j}$, is a vector in the vector space $\mathcal{V}$. Define the answer matrix at round $i$, $A_i$, of dimension ${({m_i} \times K)}$ with $\gamma_{ij}$ being the $j$-th row of $A_i$. Note that the number of messages in ${A}^{[W_i,{\mathcal{S}}]}$, or equivalently, the number of rows of matrix $A_i$ is the download cost at round $i$.  

For the first round ($i=1$), the proof of converse follows from the prior results for PIR with side information (see \cite[Lemma~1]{kadhe}). It is easy to verify that at round $1$, any optimal scalar-linear scheme can be converted to the partition-based scheme of \cite{kadhe} by row operations. The answer matrix $A_1$ corresponding to the optimal scheme has exactly $k/(M+1)$ rows. Followed by a column permutation, the matrix $A_1$ can be represented as:\vspace{0.2cm}

\[
  A_1= \begin{array}{@{} c @{}}
    \left [
      \begin{array}{ *{21}{c} }
        \overmat{M+1}{\star & \cdots & \star}\\
         & & & \overmat{M+1}{\star & \cdots & \star}\\
         & & & & & & \ddots\\
         & & & & & &  & \overmat{M+1}{\star & \cdots & \star}
      \end{array}
    \right ]
  \end{array}_{\frac{K}{M+1}\times K}
\] where $\star$'s indicate non-zero entries in matrix $A_1$ and all other entries in matrix $A_1$ are zero. Each row of $A_1$ corresponds to one of the messages in the answer. For instance, the first row corresponds to ${X_1+\cdots+X_{M+1}}$, the second row corresponds to ${X_{M+2}+\cdots+X_{2M+2}}$, and so on. The support set of each message in the answer is called a \emph{partition}. Thus, the optimal scheme in the first round has $n=K/(M+1)$ number of partitions, denoted by $\{P_1, P_2,\cdots,P_n\}$.

For the second round and after that ($i \geq 2$), in Theorem~\ref{theorem2} we prove that the maximum entropy of the answer, i.e.,  $H(A^{[{W_i},\mathcal{S}]})$, where the maximum is taken over all $W_i$ and ${\mathcal{S}}$, is lower bounded by ${KM}/({2^{i-1}(M+1)})$. 

\begin{theorem}\label{theorem2}

The maximum entropy of the answer $H(A^{[{W_i},\mathcal{S}]})$ at round $i \geq 2$ over all $W_i$ and ${\mathcal{S}}$, is lower bounded by ${KM}/({2^{i-1}(M+1)})$. 
\end{theorem}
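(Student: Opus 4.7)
My plan is to prove the bound by induction on $i$, tracking the span $U_j \triangleq \mathrm{span}(A_{1:j})$ of all downloads through round $j$, together with the effective span $V_j \triangleq U_j + \mathrm{span}(u_s : s \in \mathcal{S})$ that includes the original side information. The goal is to show that, when the scheme achieves capacity at rounds $1,\ldots,i-1$, the growth at round $i$ satisfies $\dim U_i \geq \dim U_{i-1} + 2^{l-i+1} M$. Since for a scalar-linear scheme with linearly independent downloads one has $d_i = \dim U_i - \dim U_{i-1}$, this yields $d_i \geq 2^{l-i+1} M = KM/(2^{i-1}(M+1))$, and hence in the worst case $H(A^{[W_i,\mathcal{S}]}) = d_i L \geq KML/(2^{i-1}(M+1))$, which is the desired bound.

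To establish the growth I plan to combine two ingredients. First, by the round-$1$ characterization already derived and the inductive hypothesis for rounds $2,\ldots,i-1$, the $K$ message coordinates decompose after round $i-1$ into $2^{l-i+2}$ super-partitions that refine the original round-$1$ partition structure, and inside each super-partition $V_{i-1}$ leaves exactly $M$ degrees of freedom in the quotient $\mathbb{F}_q^K / V_{i-1}$. Second, the privacy condition $\Pr(\boldsymbol{W}_j = W' \mid Q^{[W_{1:i},\mathcal{S}]}, \mathcal{X}) = 1/K$ for every $j \leq i$ and $W' \in [K]$ implies, through the resulting independence of the query distribution from $W_i$ given $\mathcal{X}$, that for every $W' \in [K]$ there exists a size-$M$ side-information set $\mathcal{S}'$ with $u_{W'} \in U_i + \mathrm{span}(u_s : s \in \mathcal{S}')$. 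Combining this ``good-vector'' requirement with the super-partition structure and the recoverability of every possible $W_i$, one concludes that inside each pair of round-$(i-1)$ super-partitions the round-$i$ downloads must add at least $M$ new independent vectors modulo $V_{i-1}$.

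I expect the hardest step to be justifying the per-pair count of $M$: a single round-$i$ download whose support straddles several pairs of super-partitions might, a priori, appear to contribute to all of them simultaneously. I plan to handle this by working inside the quotient $\mathbb{F}_q^K / V_{i-1}$, in which by the inductive structure each pair of super-partitions projects to a direct summand of dimension $2M$, and the round-$i$ vectors must collectively span an $M$-dimensional subspace inside each summand so that every $W_i$ in that summand is covered by a weight-$(M+1)$ good vector. Summing the $M$ new dimensions over the $2^{l-i+1}$ merged pairs yields the bound. The assumption $K/(M+1) = 2^l$ is used precisely to ensure that the pairing is perfect at each round, which is why the theorem's hypothesis is required.
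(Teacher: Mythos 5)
Your proposal starts from the same key observation as the paper: the privacy condition forces, for every $W' \in [K]$, a ``good vector'' of weight at most $M+1$ (supported on $W'$ and a potential side-information set $\mathcal{S}'$ contained in a single round-$1$ partition) to lie in the span of all downloads. But the two arguments then diverge. The paper collects these $K$ good vectors into a matrix $\Gamma$ and proves $\mathrm{rank}(\Gamma) \geq K/2$ via a short majority/triangularity argument; this rank bound is a \emph{global} statement that holds with no assumption about the internal structure of the optimal answers $A_2, \dots, A_{i-1}$. The remaining step is a count of how much of $\mathrm{span}(\Gamma)$ can sit inside $\mathrm{span}(A_1), \dots, \mathrm{span}(A_{i-1})$, leaving the residue to $A_i$. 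Your argument replaces this with an inductive ``super-partition'' decomposition.

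The gap is precisely that super-partition claim: you assert that any scheme achieving capacity through round $i-1$ must decompose the $K$ coordinates into $2^{l-i+2}$ blocks that refine the round-$1$ partitions, with the quotient $\mathbb{F}_q^K / V_{i-1}$ splitting as a direct sum of $M$-dimensional pieces, one per block. This is a converse structural characterization of \emph{every} optimal scheme, not just the achievability protocol, and nothing in your plan establishes it. (The cited round-$1$ result gives the partition structure only for $A_1$.) Your per-pair count of $M$ new dimensions — which you correctly flag as the delicate step — also leans on the direct-sum property to keep a single row of $A_i$ from being double-counted across pairs, and that property is again part of the unproven structural claim. Without an analogue of the paper's rank lemma (or an independent proof of the super-partition decomposition), the induction does not close. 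So I would classify this as a genuinely different route, but with a real hole where the paper's $\mathrm{rank}(\Gamma) \geq K/2$ lemma does the heavy lifting.
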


\begin{proof}
For linear schemes it is sufficient to prove that the maximum number of rows of matrix $A_i$ for $i \geq 2$ 
is lower bounded by ${KM}/({2^{i-1}(M+1)})$. The proof is based on an inductive argument and uses a simple yet powerful observation, formally stated in Lemma \ref{privac & recov Conds 1}.


\begin{lemma}\label{privac & recov Conds 1}
For any $i$, $W_{1:i}$, ${\mathcal{S}}$, and any $W^{\star}\in [K]$, there must exist ${\mathcal{S}}^\star \subseteq P_j$, $|{\mathcal{S}}^\star|=M$ and ${W^{\star} \nsubseteq P_j}$ for some ${j \in [n]}$ such that \[H({{X}}_{W^{\star}}| A^{[W_{1:i},{\mathcal{S}}]}, Q^{[W_{1:i},{\mathcal{S}}]}, {\mathcal{X}}_{{\mathcal{S}}^\star}) = 0.\]
\end{lemma}

\begin{proof}
The proof is given in Appendix.
\end{proof}

\begin{lemma}\label{privac & recov Conds 2}
At round $i$ for $i \geq 2$, in the vector space spanned by the rows of matrices $A_1, A_2,\cdots, A_i$, corresponding to any $W^\star \in [K]$, there must exist a vector which is a linear combination of at most $M+1$ messages including ${\mathcal{X}}_{W^\star}$ itself and at most $M$ other messages which are a subset of ${\mathcal{X}}_{{\mathcal{S}}^\star}$, a potential side information for ${\mathcal{X}}_{W^\star}$ defined in Lemma~\ref{privac & recov Conds 1}.
\end{lemma}
\begin{proof}
The proof is based on contradiction. Assume that at round $i$, for $i \geq 2$, in the vector space spanned by the rows of matrices $A_1, A_2,\cdots, A_i$, for a given $W^\star \in [K]$, there does not exist such a vector described in Lemma~\ref{privac & recov Conds 2}. This means that ${{X}}_{W^{\star}}$ is not recoverable from $A^{[W_{1:i},{\mathcal{S}}]}$ and ${\mathcal{X}}_{{\mathcal{S}}^\star}$, which contradicts the result of Lemma~\ref{privac & recov Conds 1}.
\end{proof}


 In fact, in the vector space spanned by the rows of $A_1, A_2,\cdots, A_i$, there must exist $K$ of such vectors, one for each potential value of $W^{\star} \in [K]$. Define matrix $\Gamma$ with these $K$ vectors being as the rows of $\Gamma$. An instance of matrix $\Gamma$ would be as follows:\vspace{0.25cm}

\[
  {\Gamma}= \begin{array}{@{} c @{}}
        \begin{array}{@{} r @{}}
      \text{R}~\{\hspace{\nulldelimiterspace} \\
      \\
      \\
      \\
      \\
      \\
      \text{L}~\left\{\begin{array}{@{}c@{}}\end{array}\right.
    \end{array}
    \left [
      \begin{array}{ *{5}{c} }
        \overmat{W^\star}{1} & & \overmat{{\mathcal{S}}^\star}{\star \star \cdots \star}\\
        \\
         & 1 & & {\star \star \cdots \star}\\
          & & \ddots\\
         & {\star \star \cdots \star} & & 1\\
         \\
         & & {\star \star \cdots \star} & &  1
      \end{array}
    \right ]
  \end{array}_{K\times K}
\]\vspace{0.1cm}

\begin{lemma}\label{privac & recov Conds 3}
The rank of matrix $\Gamma$ is lower bounded by $K/2$. 
\end{lemma}

\begin{proof}

 Since by Lemma~\ref{privac & recov Conds 1}, ${\mathcal{S}}^\star \subseteq P_j$ for some ${j \in [n]}$ and $W^{\star} \nsubseteq P_j$, then ${\mathcal{S}}^\star$ is either in the left side of $W^\star$, or in the right side of $W^\star$ in each row of matrix $\Gamma$. Accordingly, the rows of matrix $\Gamma$ can be classified into two types: L and R, based on the criteria that ${\mathcal{S}}^\star$ is in the left side of $W^\star$, or in the right side of $W^\star$, respectively. Let $z_1$ and $z_2$ denote the number of rows of type L and the number of rows of type R, respectively. It is easy to verify that the maximum of $z_1$ and $z_2$ is greater than or equal to $K/2$, i.e., $\max(z_1, z_2)\geq K/2$. Without loss of generality, assume $\max(z_1, z_2)=z_1$. Then, $z_1 \geq K/2$. By removing $z_2$ rows of type R from matrix $\Gamma$, we are left with $z_1 \geq K/2$ rows of type L that constitute a matrix of size $z_1 \times K$, in which there exists a lower triangular submatrix of size $z_1 \times z_1$ and rank $z_1 \geq K/2$. Thus, the rank of matrix $\Gamma$ is at least $z_1$ which is lower bounded by $K/2$.
\end{proof}

For the second round ($i=2$), we need to show that the number of rows of matrix $A_2$ is lower bounded by ${KM}/({2(M+1)})$. Based on Lemma~\ref{privac & recov Conds 2}, in the vector space spanned by the rows of matrices $A_1$ and $A_2$, there must exist all $K$ rows of matrix $\Gamma$ which based on Lemma~\ref{privac & recov Conds 3} is of rank greater than or equal to $K/2$. On the other hand, as mentioned earlier, the optimal scheme in the first round is partitioning where each row of $A_1$ corresponds to a linear combination of $M+1$ messages. One can readily confirm that corresponding to any $M+1$ number of linearly independent rows of matrix $\Gamma$, there exists at most one linear combination of these rows in the rows of matrix $A_1$. Thus, there must exist at least $M$ linearly independent combinations of these rows in the rows of matrix $A_2$. Then, we have: \[\mathrm{rank} (A_2) \geq \frac{M}{M+1} \times \mathrm{rank}  ({\mathcal{S}}) \geq \frac{M}{M+1} \times \frac{K}{2} = \frac{KM}{2(M+1)}\] 

In other words, matrix $\Gamma$ has at least $K/2$ number of linearly independent rows. Thus, there exist at most $K/(2(M+1))$ linearly independent combinations of these rows in the rows of matrix $A_1$. Therefore, there must exist at least ${K/2-K/(2(M+1))=KM/(2(M+1))}$ linearly independent combinations of these rows in the rows of matrix $A_2$, which indicates that the number of rows of matrix $A_2$ is lower bounded by ${KM}/({2(M+1)})$. The optimal scheme achieves the lower bound. Thus, in the optimal scheme, the number of rows of matrix $A_2$ is exactly ${KM}/({2(M+1)})$.
  
For the third round ($i=3$), we need to show that the number of rows of matrix $A_3$ is lower bounded by ${KM}/({4(M+1)})$. Based on the Lemma~\ref{privac & recov Conds 2}, in the vector space spanned by the rows of matrices $A_1$, $A_2$ and $A_3$, there must exist all $K$ rows of matrix $\Gamma$ which based on Lemma~\ref{privac & recov Conds 3} is of rank greater than or equal to $K/2$. By the same reasoning as in the case of $i=2$, corresponding to any $2(M+1)$ number of linearly independent rows of matrix $\Gamma$, there exist at most two linearly independent combinations of these rows in the rows of matrix $A_1$. On the other hand, we showed that $A_2$ in the optimal scheme is of rank ${KM}/({2(M+1)})$, which shows that corresponding to any $2(M+1)$ number of linearly independent rows of matrix $\Gamma$, there exist at most $M$ linearly independent combinations of these rows in the rows of matrix $A_2$. Thus, there must exist at least $2(M+1)-2-M=M$ linearly independent combinations of these rows in the rows of matrix $A_3$. Then, we have: 
\[\mathrm{rank}  (A_3) \geq \frac{M}{2(M+1)} \times \mathrm{rank}  (S) \geq  \frac{KM}{4(M+1)}\] 

In other words, matrix $\Gamma$ has at least $K/2$ number of linearly independent rows. Thus, there exist at most $K/(2(M+1))$ linearly independent combinations of these rows in the rows of matrix $A_1$, and ${KM/(4(M+1))}$ linearly independent combinations of these rows in the rows of matrix $A_2$. Therefore, there must exist at least ${{K}/{2}-{K}/{(2(M+1))}-{KM}/{(4(M+1))}}={KM}/{(4(M+1))}$ linearly independent combinations of these rows in the rows of matrix $A_3$, which indicates that the number of rows of matrix $A_3$ is lower bounded by ${KM}/({4(M+1)})$.
 
Using the same proof technique and similar reasoning as in the cases of $i=2$ and $i=3$, it can be shown that the number of rows of $A_i$ for $i \geq 2$ is lower bounded by $KM/(2^{i-1}(M+1))$. By the result of Lemma~\ref{privac & recov Conds 2}, in the vector space spanned by the rows of matrices $A_1, A_2, \cdots, A_i$, there must exist all $K$ rows of matrix $\Gamma$, which is of rank greater than or equal to $K/2$ (by Lemma~\ref{privac & recov Conds 3}). Again, similarly as in the cases of $i=2$ and $i=3$, it follows that corresponding to any $2^{i-2}(M+1)$ number of linearly independent rows of matrix $\Gamma$, there exist at most $2^{i-2}, 2^{i-3}M, 2^{i-4}M,\cdots, M$, linearly independent combinations of these rows in the rows of matrix  $A_1, A_2, A_3, \cdots, A_{i-1}$, respectively. Thus, there must exist at least ${2^{i-2}(M+1)-(2^{i-2})-(\sum_{j=0}^{i-3} 2^j)M=M}$ linearly independent combinations of these rows in the rows of matrix $A_i$. Then, we have:
\[\mathrm{rank} (A_i) \geq \frac{M}{2^{i-2}(M+1)} \times \frac{K}{2}= \frac{KM}{2^{i-1}(M+1)}\] which shows that the number of rows of matrix $A_i$ is lower bounded by ${KM}/({2^{i-1}(M+1)})$. 
\end{proof}

\section{Achievability Scheme}\label{achievability Scheme}


In this section, we propose an OPIR protocol for arbitrary $K$ and $M$ where $K/(M+1)=2^l$ for some ${l \geq 1}$, which achieves the rate ${(M+1)}/{K}$ in the first round and ${(2^{i-1}(M+1))}/{KM}$ at rounds $i \geq 2$. The proposed scheme, referred to as the \textit{Online Partitioning (OP) Protocol}, is described in the following. \vspace{0.2cm} 

Each round of the OP protocol consists of four steps:\vspace{0.2cm}

\noindent {\textbf{Round $\bm{i=1}$}:}\vspace{0.2cm} 

\textit{\textbf{Step 1:}} The user creates a partition of the $K$ messages into $n_1\triangleq K/(M+1)$ sets. The first partition, $P_1^1$ is formed by combining the demand and the side information set $\mathcal{S}$: $P_1^1 \triangleq \{W_1\} \cup \mathcal{S}$. The user randomly partitions the set of messages $[K]\setminus P_1^1$ into $n_1-1$ sets, each of size $M + 1$, denoted as $P_2^1,\cdots,P_{n_1}^{1}$. 

\textit{\textbf{Step 2:}} The user sends to the server a uniform random permutation of the partition ${\{P_1^1,\cdots,P_{n_1}^{1}\}}$, i.e., it sends
${\{P_1^1,\cdots,P_{n_1}^{1}\}}$ in a random order. 

\textit{\textbf{Step 3:}}  The server picks an arbitrary Cauchy matrix of size ${K \times (Ml+1)}$ denoted by ${C}\triangleq [c_{i,j}]$ with parameters from $\mathbb{F}_q$, where ${q \geq K+Ml+1}$. For a subset $\mathcal{P} \subset [K]$, let $\mathcal{V_P}$ denote the characteristic vector of the set $\mathcal{P}$, which is a vector of length $K$ such that for all $i \in [K]$, its $i$-th entry is $c_{i,1}$ if $i \in \mathcal{P}$, otherwise it is $0$. The server computes the answer $A^{[W_{1},S]}$ as a set of $n_1$ inner products given by ${A^{[W_{1},S]}= \{A_{P_{1}^{1}},\cdots,A_{P_{n_1}^{1}}\}}$, where ${A_{\mathcal{P}}=[X_1,\cdots,X_K] \cdot \mathcal{V_P}}$ for all ${\mathcal{P} \in \{P_1^{1},P_2^{1}, \cdots, P_{n_1}^{1}\}}$.  

\textit{\textbf{Step 4:}} Upon receiving the answer from the server, the user decodes $X_{W_1}$ by subtracting off the contributions of its side information ${\mathcal{X}}_{\mathcal{S}}$ from $A_P$ for some ${P \in \{P_1^{1},P_2^{1}, \cdots, P_{n_1}^{1}\}}$ such that $W_1 \in P$.
\vspace{0.21 cm} 

\noindent {\textbf{Round $\bm{i \geq 2}$}:}\vspace{0.2cm} 

\textit{\textbf{Step 1:}} Based on the OP protocol at round $i-1$, the user sends to the server a uniform random
permutation of a partition of the $K$ messages into $n_{i-1}$ number of sets as the query, i.e.,  $Q^{[W_{i-1},{\mathcal{S}}]}=\{P_1^{i-1},P_2^{i-1}, \cdots, P_{n_{i-1}}^{i-1}\}$. Given the query at round $i-1$, i.e., $Q^{[W_{i-1},{\mathcal{S}}]}$ and given $W_i$ and ${\mathcal{S}}$, the user creates a partition of $K$ messages into $n_{i}=\frac{n_{i-1}}{2}$ sets $\{P_1^{i},P_2^{i}, \cdots, P_{n_i}^{i}\}$. It should be noted that $n_{i-1}$ is always divisible by 2 based on the assumption that $K/(M+1)$ is a power of 2. It is easy to confirm that $W_i$ and ${\mathcal{S}}$ belong to two different partitions at round $i-1$. Assume, without loss of generality, that ${W_i \in P_1^{i-1}}$ and ${\mathcal{S}} \in P_2^{i-1}$. Then, the user constructs ${P_1^{i}=P_1^{i-1} \cup P_2^{i-1}}$. For constructing each $P^{i} \in \{P_2^{i}, \cdots, P_{n_i}^{i}\}$, the user chooses two partitions from the remaining partitions at round $i-1$ uniformly at random (without replacement) and unions them.

\textit{\textbf{Step 2:}} The user sends to the server a uniform random
permutation of the partition $\{P_1^{i},P_2^{i}, \cdots, P_{n_i}^{i}\}$, i.e., it sends $Q^{[W_{i},{\mathcal{S}}]}=\{P_1^{i},P_2^{i}, \cdots, P_{n_i}^{i}\}$ in a random order.

\textit{\textbf{Step 3:}} The server extracts a submatrix ${{H}=[h_{ij}]_{K \times M}}$ of the Cauchy matrix ${C}$ which contains all rows and $j$-th column of the matrix $C$, where ${(i-2)M+2 \leq j \leq (i-1)M+1}$. For a subset $\mathcal{P} \subset [K]$, let $G_{\mathcal{P}}=[g_{ij}]_{K \times M}$ denotes the characteristic matrix of the set $\mathcal{P}$, which is a matrix of size $K \times M$ such that for each $j \in [M]$, for all $i \in [K]$, ${g_{ij}=h_{ij}}$ if $i \in \mathcal{P}$, otherwise it is zero, i.e.,  $g_{ij}=0$ for $i \notin \mathcal{P}$. The server computes $A_{\mathcal{P}}=[X_1,\cdots,X_K] \cdot G_{\mathcal{P}}$ for all ${\mathcal{P} \in \{P_1^{i},P_2^{i}, \cdots, P_{n_i}^{i}\}}$. In fact, ${A_{\mathcal{P}}=[(A_{{\mathcal{P}}})_1,\cdots,(A_{{\mathcal{P}}})_M]}$ is a row vector of length $M$ that gives $M$ linearly independent combinations of the messages $X_i$ for $i \in [\mathcal{P}]$. Finally, the server computes the answer $A^{[W_{i},{\mathcal{S}}]}$ as a set of ${n_i\times M}$ linearly independent combinations of the messages, i.e., $A^{[W_{i},{\mathcal{S}}]}= \{(A_{P_{1}^{i}})_1,\cdots,(A_{P_{1}^{i}})_M, \cdots, (A_{P_{n_i}^{i}})_1,\cdots,(A_{P_{n_i}^{i}})_M\}$.
 
\textit{\textbf{Step 4:}} Upon receiving the answer from the server, the user retrieves $X_{W_i}$ by subtracting off the contributions of its side information ${\mathcal{X}}_{\mathcal{S}}$ from $A^{[W_{1:i},{\mathcal{S}}]}$ and solving a set of ${2^{i-2}(M+1)}$ linearly independent equations with ${2^{i-2}(M+1)}$ unknowns. It should be noted that every submatrix of a Cauchy matrix is itself a Cauchy matrix, which guarantees the existence of ${2^{i-2}(M+1)}$ linearly independent equations.

\begin{lemma}\label{lemma 4}
The OP protocol satisfies the recoverability and individual privacy conditions, while achieving the rate ${{(M+1)}/{K}}$ at first round, and the rate $(2^{i-1}(M+1))/{KM}$ at round $i \geq 2$.
\end{lemma}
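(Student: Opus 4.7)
The plan is to verify the three claims (rate, recoverability, individual privacy) separately, exploiting the inductive structure of the OP protocol whose \emph{special} partition $P_1^i$ doubles in size each round. For the rate, I would just count symbols: at round $1$ the server returns $n_1=K/(M+1)$ scalar linear combinations, so $H(A^{[W_1,\mathcal{S}]})\leq (K/(M+1))L$ and $R_1=(M+1)/K$; at round $i\geq 2$ it returns $n_i M = KM/(2^{i-1}(M+1))$ scalars, giving $R_i = 2^{i-1}(M+1)/(KM)$, matching~\eqref{Capacity 1}.

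For recoverability I would induct on $i$ and prove the stronger invariant that after decoding round $i$ the user knows \emph{every} message in $P_1^i$. The base case is immediate: $|P_1^1|=M+1$, the $M$ messages $\mathcal{X}_{\mathcal{S}}$ are known, and $A_{P_1^1}$ is a single Cauchy-weighted combination of the $M+1$ entries of $P_1^1$, so subtracting the contribution of $\mathcal{X}_{\mathcal{S}}$ yields a nonzero multiple of $X_{W_1}$. For the inductive step, write $P_1^i = P_1^{i-1}\cup P_k^{i-1}$ with $W_i\in P_k^{i-1}$; by the hypothesis the $2^{i-2}(M+1)$ entries of $P_1^{i-1}$ are already known, so only the $2^{i-2}(M+1)$ entries of $P_k^{i-1}$ are unknown. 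Counting the linear equations available that involve only these unknowns, round $1$ contributes $2^{i-2}$ (one per round-$1$ sub-partition of $P_k^{i-1}$), round $j$ contributes $2^{i-1-j}M$ for $2\leq j\leq i-1$, and round $i$ contributes $M$ more (after substituting the known $P_1^{i-1}$ into $A_{P_1^i}$), for a total of $2^{i-2}+M\sum_{j=2}^{i-1} 2^{i-1-j}+M = 2^{i-2}(M+1)$, matching the unknown count exactly. Since every submatrix of a Cauchy matrix is itself Cauchy and hence nonsingular, these equations are linearly independent and the user solves for every entry of $P_k^{i-1}$, in particular for $X_{W_i}$.

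For individual privacy I would show $\Pr(\boldsymbol{W}_j=W'\mid Q^{[W_{1:i},\mathcal{S}]},\mathcal{X})=1/K$ for every $j\in[i]$ and $W'\in[K]$. Conditioning on $\mathcal{X}$ is harmless because the messages are independent of the index-side randomness, so it suffices to analyse the joint law of $(\boldsymbol{W}_{1:i},\boldsymbol{S},Q^{[W_{1:i},\mathcal{S}]})$. This law is invariant under any relabeling $\sigma\in S_K$ of $[K]$, because the prior on $(\boldsymbol{W}_{1:i},\boldsymbol{S})$ is symmetric under $S_K$ and all three sources of protocol randomness---the uniform random partitioning of $[K]\setminus P_1^1$ at round $1$, the uniform random pairing of non-demand parts at rounds $i\geq 2$, and the uniform permutation of the parts sent at every round---commute with relabeling. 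For any fixed query realization $q$ I would then verify that the stabilizer of $q$ in $S_K$ acts transitively on $[K]$: within each round-$1$ part any two elements can be swapped; within any round-$i$ part the two round-$(i-1)$ subparts can be swapped by a bijection that respects the balanced merging subtree; and any two round-$i$ parts can be exchanged elementwise because all round-$i$ parts have identical subtree shapes (each is the union of $2^{i-1}$ round-$1$ parts merged via a complete binary tree). Transitivity of the stabilizer combined with $S_K$-invariance of the joint law forces the posterior of $\boldsymbol{W}_j$ to be uniform on $[K]$.

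The main obstacle is this privacy step for $i\geq 2$: the server observes the complete merging history, so the relabelings admissible in the stabilizer are heavily constrained, and I must carefully exhibit the subtree-swaps and exploit the hypothesis $K/(M+1)=2^l$ to ensure the merging trees are all balanced and mutually isomorphic. Once the transitive action of the stabilizer is in place, the remaining pieces of the lemma reduce to the counting arguments above.
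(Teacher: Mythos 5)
Your rate count and recoverability outline follow essentially the same path as the paper. The rate is the same $n_i M L$ entropy count. For recoverability the paper simply says the user solves $2^{i-2}(M+1)$ independent equations in $2^{i-2}(M+1)$ unknowns ``because of choosing the coefficients from a Cauchy matrix''; you make this precise by introducing the invariant that after round $i$ the user knows all of $P_1^i$, doing an explicit equation count per round, and noting the Cauchy/MDS property --- a cleaner and more careful version of the same idea. The one place you and the paper are both slightly glib is the linear-independence of the equations: the coefficient matrix you assemble is not literally a square Cauchy submatrix (it has the block-zero pattern coming from the nested partitioning), so ``every submatrix of a Cauchy matrix is Cauchy'' is suggestive rather than a complete justification, but this matches the paper's level of rigor.

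Where you genuinely diverge is the privacy argument. The paper proves $\Pr(\boldsymbol{W}_i=W'\mid Q^{[W_{1:i},\mathcal{S}]})=1/K$ by a direct Bayesian computation: it enumerates the $2^{i-2}\binom{M+1}{M}$ candidate side-information sets $\mathcal{S}^\star$ compatible with $W'$ being the round-$i$ demand, computes $\Pr(\boldsymbol{W}_i=W'\mid Q,\mathcal{S}^\star)=1/(2^{i-2}(M+1))$ and $\Pr(\mathcal{S}^\star\mid Q)=1/K$, and sums --- then asserts without detail that the same technique works for $j<i$. You instead prove $S_K$-equivariance of the joint law of $(\boldsymbol{W}_{1:i},\boldsymbol{S},Q)$ (which holds because the prior on $(\boldsymbol{W}_{1:i},\boldsymbol{S})$ and all three sources of protocol randomness are permutation-symmetric) and then show that the stabilizer of any query realization $q$ acts transitively on $[K]$ via subtree swaps in the balanced merging tree, which forces every marginal posterior $\Pr(\boldsymbol{W}_j\mid Q=q)$ to be uniform simultaneously for all $j\in[i]$. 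Your symmetry route is structurally cleaner, handles all $j$ at once rather than by an unelaborated ``same technique'' remark, and makes the role of the hypothesis $K/(M+1)=2^l$ (all merging subtrees are balanced and mutually isomorphic) transparent; the paper's route is more hands-on and produces the explicit probability $1/K$ as an arithmetic identity. Both are correct; the transitivity of the stabilizer is the step you rightly flag as needing the most care, and the subtree-swap construction you sketch does establish it.
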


\begin{proof}
The OP protocol for the first round is based the Partition and Code PIR Scheme which satisfies the recoverability and the privacy conditions and achieves the rate ${(M+1)}/{K}$ \cite{kadhe}. It should be noted that in the first round, the coefficients of the messages in the answer which are chosen according to a Cauchy matrix, have no effect on the recoverability and the privacy proofs.

In the OP protocol at round $i \geq 2$, the answer $A^{[W_i,{\mathcal{S}}]}$ consists of $n_iM={KM}/({2^{i-1}(M+1)})$ linear combinations of the messages in $X$, i.e., ${A^{[W_{i},{\mathcal{S}}]}=\{(A_{P_{1}^{i}})_1,\cdots,(A_{P_{n_i}^{i}})_M\}}$. Since the messages in $X$ are uniformly and independently distributed over $\mathbb{F}_{q^m}$, and $\{(A_{P_{1}^{i}})_1,\cdots,(A_{P_{n_i}^{i}})_M\}$ are linearly independent combinations of the messages, then $\{(A_{P_{1}^{i}})_1,\cdots,(A_{P_{n_i}^{i}})_M\}$ are uniformly and independently distributed over $\mathbb{F}_{q^m}$, i.e., ${H((A_{P_{1}^{i}})_1) = \dots = H((A_{P_{n_i}^{i}})_M) = m\log_2 q=L}$, and ${H(A^{[W_{i},{\mathcal{S}}]})={H((A_{P_{1}^{i}})_1)}+\cdots+H((A_{P_{n_i}^{i}})_M)=n_iML}$. Therefore, the rate of the OP protocol at round $i \geq 2$ is equal to $L/H(A^{[W_{i},{\mathcal{S}}]})={(2^{i-1}(M+1))}/{KM}$.

From the step $4$ of the OP protocol for round $i \geq 2$, it can be easily verified that the recoverability condition is satisfied because of choosing the coefficients from a Cauchy matrix.

To prove that the OP protocol satisfies the privacy condition at round $i \geq 2$, we need to show that  ${\mathbb{P}(\boldsymbol{W}_j= W'| Q^{[W_{1:i},{\mathcal{S}}]}, X)= \frac{1}{K}}$ for all $W' \in [K]$ and all $j \in [i]$. Since the OP protocol does not depend on the contents of the messages in $X$, then it is sufficient to prove that $\mathbb{P}(\boldsymbol{W}_j= W'| Q^{[W_{1:i},{\mathcal{S}}]})= \frac{1}{K}$ for all $W' \in [K]$ and all $j \in [i]$. Thus, for $j=i$, we have:

\begin{align*}
&\mathbb{P}(\boldsymbol{W}_i= W'| Q^{[W_{1:i},{\mathcal{S}}]})
\\ &= \sum_{{\mathcal{S}}^\star} (\mathbb{P}(\boldsymbol{W}_i= W'| Q^{[W_{1:i},{\mathcal{S}}]}, {\mathcal{S}}^\star)  \times \mathbb{P}( {\mathcal{S}}^\star|Q^{[W_{1:i},{\mathcal{S}}]})
\end{align*}

where the sum is over all possible ${\mathcal{S}}^\star$ of size $M$, a potential side information for ${W_i}= W'$. Let assume $W'$ is located in the $k_{th}$ partition of round $i$, i.e. $P_k^{i}$. As mentioned earlier, at round $i \geq 2$, each partition is a union of two partitions at round $i-1$. Without loss of generality assume that, $k_{th}$ partition of round $i$ (of size $2^{i-1}(M+1)$), is a union of $w_{th}$ and $v_{th}$ partitions of round $i-1$ (each of size $2^{i-2}(M+1)$), i.e., $P_k^{i}=P_w^{i-1} \cup P_v^{i-1}$, and $W'$ is located in the $P_w^{i-1}$. A possible potential side information ${\mathcal{S}}^\star$ for ${W_i}= W'$, would be a subset of $P_v^{i-1}$ of size $M$ which is completely located in one of the partitions of the first round. $P_v^{i-1}$ of size $2^{i-2}(M+1)$, is a union of $2^{i-2}$ partitions of the first round. From each partition of the first round, all subsets of size $M$, i.e., $\binom{M+1}{M}$, can be considered as possible ${\mathcal{S}}^\star$. Thus, one can consider $2^{i-2}\binom{M+1}{M}$ number of possible potential side information ${\mathcal{S}}^\star$ for ${W_i}= W'$. For a specific ${\mathcal{S}}^\star=\hat{{\mathcal{S}}}$, given that ${\mathcal{S}}^\star=\hat{{\mathcal{S}}}$ belongs to $P_v^{i-1}$, each of the elements in $P_w^{i-1}$ (of size $2^{i-2}(M+1)$) can be user's demand index with the same probability. In other words, ${\mathbb{P}(\boldsymbol{W}_i= W'| Q^{[W_{1:i},{\mathcal{S}}]},{\mathcal{S}}^\star=\hat{\mathcal{S}})=\frac{1}{2^{i-2}(M+1)}}$.

The probability of a specific ${\mathcal{S}}^\star=\hat{{\mathcal{S}}}$ given $Q^{[W_{1:i},{\mathcal{S}}]}$, can be computed by the application of the total probability theorem and chain rule, we have:\vspace{-0.4cm}

\begin{align*}
&\mathbb{P}({\mathcal{S}}^\star=\hat{\mathcal{S}}|Q^{[W_{1:i},{\mathcal{S}}]})\\ &=\sum_{j \in [\frac{K}{M+1}]}\mathbb{P}( {\mathcal{S}}^\star=\hat{{\mathcal{S}}}|Q^{[W_{1:i},{\mathcal{S}}]}, {\mathcal{S}}^\star \in P_j^{1}) \times \mathbb{P}( {\mathcal{S}}^\star \in P_j^{1}|Q^{[W_{1:i},{\mathcal{S}}]})
\end{align*}

where the potential side information ${\mathcal{S}}^\star$ belongs to each of the partitions of the first round with the same probability, i.e., $\mathbb{P}( {\mathcal{S}}^\star \in P_j^{1}|Q^{[W_{1:i},{\mathcal{S}}]})=\frac{M+1}{K}$ for all ${j \in [\frac{K}{M+1}]}$. In each partition of the first round, there are $\binom{M+1}{M}$ subsets of size $M$, each of which can be the potential side information  ${\mathcal{S}}^\star$ with the same probability. Thus, ${\mathbb{P}( {\mathcal{S}}^\star=\hat{{\mathcal{S}}}|Q^{[W_{1:i},{\mathcal{S}}]}, {\mathcal{S}}^\star \in P_j^{1})=\frac{1}{M+1}}$, for one $j \in [\frac{K}{M+1}]$ and zero for others. Thus, we have:
\begin{align*}
&\mathbb{P}({\mathcal{S}}^\star|Q^{[W_{1:i},{\mathcal{S}}]})= \frac{1}{M+1} \times \frac{M+1}{K}
\end{align*}

 Finally, $\mathbb{P}(\boldsymbol{W}_i= W'| Q^{[W_{1:i},{\mathcal{S}}]})$ is obtained as follows:\vspace{-0.4cm}

\begin{align*}
&\mathbb{P}(\boldsymbol{W}_i= W'| Q^{[W_{1:i},{\mathcal{S}}]})
\\ &= \sum_{{\mathcal{S}}^\star} (\mathbb{P}(\boldsymbol{W}_i= W'| Q^{[W_{1:i},{\mathcal{S}}]}, {\mathcal{S}}^\star)  \times \mathbb{P}( {\mathcal{S}}^\star|Q^{[W_{1:i},{\mathcal{S}}]})
\\ &=2^{i-2} (M+1) \times \frac{1}{2^{i-2}(M+1)} \times  \frac{1}{M+1}\times \frac{M+1}{K}=\frac{1}{K}.
\end{align*}

We proved that $\mathbb{P}(\boldsymbol{W}_i= W'| Q^{[W_{1:i},{\mathcal{S}}]})= \frac{1}{K}$ for all ${W' \in [K]}$. Using the same proof technique, for $j \in [i-1]$, it can be shown that $\mathbb{P}(\boldsymbol{W}_j= W'| Q^{[W_{1:i},{\mathcal{S}}]})= \frac{1}{K}$ for all $W' \in [K]$.

\end{proof}

\begin{example} (OP Protocol) Assume that the server has ${K=12}$ messages $\{X_1, X_2, \cdots, X_{12}\}$, and the user has $M=2$ messages, $X_2$ and $X_3$, as side information, i.e., ${\mathcal{S}}=\{2,3\}$. Consider a scenario as follows:\vspace{0.1cm}

\textbf{First round:} The user demands the message $X_1$, i.e., $W_1=1$. Thus, the user labels $4$ sets of size $3$ as $P_1^1, \cdots, P_4^1$. Next, the user constructs $P_1^1=\{W_1,{\mathcal{S}}\}=\{1,2,3\}$ and randomly partitions the set of remaining messages into $P_2^1, P_3^1, P_4^1$ sets. Assume the user has chosen $P_2^1=\{4,5,6\}$, $P_3^1=\{7,8,9\}$, $P_4^1=\{10,11,12\}$. Then, the user sends to the server a random permutation of $\{{P_1^1},\cdots, P_4^1\}$. The server picks an arbitrary Cauchy matrix $C=[c_{ij}]_{12 \times 5}$ with parameters over $\mathbb{F}_{17}$ as follows:\vspace{-0.5cm}

\[
C=
  \begin{bmatrix}
    7 & 13 & 6 & 9 & 1 \\
    3 & 7 & 13 & 6 & 9 \\
    5 & 3 & 7 & 13 & 6 \\
    15 & 5 & 3 & 7 & 13 \\
    2 & 15 & 5 & 3 & 7 \\
    12 & 2 & 15 & 5 & 3 \\
    14 & 12 & 2 & 15 & 5 \\
    10 & 14 & 12 & 2 & 15 \\
    4 & 10 & 14 & 12 & 2 \\
    11 & 4 & 10 & 14 & 12 \\
    8 & 11 & 4 & 10 & 14 \\
    16 & 8 & 11 & 4 & 10 \\
  \end{bmatrix}
\]
The server sends back to the user four coded packets as:\vspace{0.125cm}\\ $Y_1=7X_1+3X_2+5X_3$, \\
 $Y_2=15X_4+2X_5+12X_6$, \\
 $Y_3=14X_7+10X_8+4X_9$,\\ 
 $Y_4=11X_{10}+8X_{11}+16X_{12}$,\vspace{0.125cm}\\ 
The user can retrieve $X_1$ by replacing the values of $X_2$ and $X_3$ in $Y_1$. From the server's perspective, the user's demand is in one of the four partitions $\{P_1^1, \cdots, P_4^1\}$ with probability $\frac{1}{4}$, and in each partition, each of the indices is the user's demand index with probability $\frac{1}{3}$. Thus, the probability that each of the indices $i \in [12]$ being as a demand index would be the same, i.e.,  $P(\boldsymbol{W}_1=i|Q^{[W_1=1,\{2,3\}]})=\frac{1}{12}=P_{W_1}(i)$.\vspace{0.1cm}
 
\textbf{Second round:} The user demands the message $X_4$, i.e., $W_2=4$. Based on the OP protocol, the user labels $2$ sets of size $6$ as $P_1^2, P_2^2$. Since $W_2=4 \in P_2^1$ and ${\mathcal{S}} \in P_1^1$, the user constructs $P_1^2=P_1^1 \cup P_2^1=\{1,2,3,4,5,6\}$. For constructing $P_2^2$, the user chooses the remaining two partitions of round 1, i.e., $P_3^1$ and $P_4^1$, and unions them, i.e., ${P_2^2=P_3^1 \cup P_4^1=\{7,8,9,10,11,12\}}$. Then, the user sends to the server a random permutation of $\{{P_1^2}, {P_2^2}\}$. Based on the extracted submatrix $H$ of the selected Cauchy matrix $C$, the server constructs 2 linearly independent combinations of the messages with indices in each partition $\{{P_1^2}, {P_2^2}\}$, and sends back to the user four coded packets as follows:\vspace{0.125cm}\\
$Z_1=13X_1+7X_2+3X_3+5X_4+15X_5+2X_6$,\\
$Z_2=6X_1+13X_2+7X_3+3X_4+5X_5+15X_6$,\\
$Z_3=12X_7+14X_8+10X_9+4X_{10}+11X_{11}+8X_{12}$,\\
$Z_4=2X_7+12X_8+14X_9+10X_{10}+4X_{11}+11X_{12}$.\vspace{0.125cm}\\
 The user has already downloaded $X_1$ from the first round. Thus, from the answers of the first and second rounds, the user can retrieve $X_4$ by solving a set of three linearly independent equations with three unknown as follows:\vspace{0.125cm}\\
 $15X_4+2X_5+12X_6=Y_2$,\\ 
 $5X_4+15X_5+2X_6=Z_1-13X_1-7X_2-3X_3$, \\ 
 $3X_4+5X_5+15X_6=Z_2-6X_1-13X_2-7X_3$.\vspace{0.125cm} \\ 
 From the server's perspective, given the queries $Q^{[W_{1:2},{\mathcal{S}}]}$ and all the packets, the probability that each of the indices $i \in [12]$ being as a user's demand index in the second round would be the same and can be calculated as follows:\vspace{-0.8cm}
\begin{align*}
&\mathbb{P}(\boldsymbol{W}_2= i| Q^{[W_{1:2},{\mathcal{S}}]}, X) 
\\ &= \sum_{{\mathcal{S}}^\star} (\mathbb{P}(\boldsymbol{W}_2= i| Q^{[W_{1:2},{\mathcal{S}}]}, X, {\mathcal{S}}^\star)  \times \mathbb{P}( {\mathcal{S}}^\star|Q^{[W_{1:2},{\mathcal{S}}]}, X)) \\ &=3 \times \frac{1}{3} \times \frac{1}{\binom{3}{2}} \times \frac{1}{4}=\frac{1}{12}.
\end{align*}
Also, from the server's perspective, given the queries $Q^{[W_{1:2},{\mathcal{S}}]}$ and all the packets, the probability that each of the indices $i \in [12]$ being as a user's demand index in the first round is the same, i.e., $\mathbb{P}(\boldsymbol{W}_1= i| Q^{[W_{1:2},{\mathcal{S}}]}, X)= \frac{1}{12}$. This means that the scheme is individually private and from the server's perspective, the user's demand index in the first round will remain private after round 1.\vspace{0.1cm}

\textbf{Third round:} The user demands the message $X_7$, i.e., $W_3=7$. Based on the OP protocol, the user labels $1$ set of size $12$ as $P_1^3$. Since $W_3=7 \in P_2^2$ and ${\mathcal{S}} \in P_1^2$, the user constructs $P_1^3=P_1^2 \cup P_2^2=\{1,2,\cdots,12\}$. Based on the selected Cauchy matrix $C$, the server constructs 2 linearly independent combinations of all the messages, and sends back to the user the following two coded packets:\vspace{0.125cm}\\
\vspace{0.125cm}
$T_1=\sum_{i=1}^{12} c_{i4}X_i$,\\
\vspace{0.125cm}
$T_2=\sum_{i=1}^{12} c_{i5}X_i$.\vspace{0.125cm}\\
 The user has already downloaded $X_1$ from the first round and $X_4$, $X_5$, and $X_6$ from the second round. Thus, from the answers of the first, second and third rounds, the user can retrieve $X_7$ by solving a set of six linearly independent equations with six unknown as follows:\vspace{0.125cm}\\
 \vspace{0.125cm}
 $14X_7+10X_8+4X_9=Y_3$,\\
 \vspace{0.125cm}
 $11X_{10}+8X_{11}+16X_{12}=Y_4$,\\
 \vspace{0.125cm}
 $12X_7+14X_8+10X_9+4X_{10}+11X_{11}+8X_{12}=Z_3$,\\
 \vspace{0.125cm}
 $2X_7+12X_8+14X_9+10X_{10}+4X_{11}+11X_{12}=Z_4$,\\
 \vspace{0.125cm}
 $\sum_{i=7}^{12} c_{i4}X_i=T_1-\sum_{i=1}^6 c_{i4}X_i$,\\
 \vspace{0.125cm}
$\sum_{i=7}^{12} c_{i5}X_i=T_2-\sum_{i=1}^6 c_{i5}X_i$.\vspace{0.125cm}\\
From the server's perspective, given the queries $Q^{[W_{1:3},{\mathcal{S}}]}$ and all the packets, the probability that each of the indices $i \in [12]$ being as a user's demand index would be the same and can be calculated as follows:

\begin{align*}
&\mathbb{P}(\boldsymbol{W}_3= i| Q^{[W_{1:3},{\mathcal{S}}]}, X) 
\\ &= \sum_{{\mathcal{S}}^\star} (\mathbb{P}(\boldsymbol{W}_3= i| Q^{[W_{1:3},{\mathcal{S}}]}, X, {\mathcal{S}}^\star)  \times \mathbb{P}( {\mathcal{S}}^\star|Q^{[W_{1:3},{\mathcal{S}}]}, X)) \\ &=6 \times \frac{1}{6} \times \frac{1}{\binom{3}{2}} \times \frac{1}{4}= \frac{1}{12}.
\end{align*}

Also, from the server's perspective, in the second round the demand of the first round and the demand of the second round remains private, i.e., $\mathbb{P}(\boldsymbol{W}_1= i| Q^{[W_{1:3},{\mathcal{S}}]}, X)=\frac{1}{12}$ and $\mathbb{P}(\boldsymbol{W}_2= i| Q^{[W_{1:3},{\mathcal{S}}]}, X)=\frac{1}{12}$.

\end{example} \vspace{1cm}

\appendix[Proof of Lemma~\ref{privac & recov Conds 1}]
  If there does not exist any ${\mathcal{S}}^\star$ such that ${{X}}_{W^{\star}}$ is recoverable from $A^{[W_{1:i},{\mathcal{S}}]}$ and ${\mathcal{X}}_{{\mathcal{S}}^\star}$, then the server knows that $W^{\star}$ cannot be the user's demand index, and this violates the privacy condition. Given the optimal scheme in the first round, if there exists a ${\mathcal{S}}^\star \nsubseteq P_j$ for some $j \in [n]$, such that ${{X}}_{W^{\star}}$ is recoverable from $A^{[W_{1:i},{\mathcal{S}}]}$, $Q^{[W_{1:i},{\mathcal{S}}]}$ and ${\mathcal{X}}_{{\mathcal{S}}^\star}$, then the server knows that ${\mathcal{S}}^{\star}$ cannot be the user's side information index set. Thus, $W^{\star}$ cannot be the user's demand index, and this violates the privacy condition. 

\bibliographystyle{IEEEtran}
\bibliography{QGTRefs}

\end{document}